
\documentclass[11pt]{article}
\usepackage{fullpage}
\newenvironment{proof}{\noindent\textbf{Proof.}\quad}{~\qedsymbol}

\usepackage{eepic,epic}
\usepackage{epsfig}
\usepackage{graphicx}
\usepackage{color}

\sloppy

\usepackage{amssymb}
\usepackage{amsmath}
\usepackage{pifont}

\makeatletter %

\makeatother

\def\literalqed{{\ \nolinebreak\hfill\mbox{\qedblob\quad}}}

\makeatletter \@beginparpenalty=10000 \makeatother

\hyphenation{area areas}
\hyphenation{chapter}
\hyphenation{circuits circuit}
\hyphenation{crypto-complexity}
\hyphenation{ent-schei-dungs-prob-lem}
\hyphenation{par-allel par-allelize par-allelized threshold Hem-a-spaan-dra}
\hyphenation{Hemachan-dra}
\hyphenation{Papa-di-mi-triou}
\hyphenation{polylog-a-rith-mic}
\hyphenation{rational}
\hyphenation{theory theorist theorists theoretical theorem theorems}
\hyphenation{Thierauf}
\hyphenation{Wa-ta-nabe Ogi-hara Ogi-wara Mi-tsu-nori Sei-no-suke}
\hyphenation{Schear Heribert Eduardo}
\mathcode`\0="0030      %
\mathcode`\1="0031
\mathcode`\2="0032
\mathcode`\3="0033
\mathcode`\4="0034
\mathcode`\5="0035
\mathcode`\6="0036
\mathcode`\7="0037
\mathcode`\8="0038
\mathcode`\9="0039

\hyphenation{area areas}
\hyphenation{chapter}
\hyphenation{circuits circuit}
\hyphenation{crypto-complexity}
\hyphenation{ent-schei-dungs-prob-lem}
\hyphenation{par-allel par-allelize par-allelized threshold Hem-a-spaan-dra}
\hyphenation{Hemachan-dra}
\hyphenation{Hema-spaan-dra}
\hyphenation{Fa-li-szew-ski}
\hyphenation{Papa-di-mi-triou}
\hyphenation{polylog-a-rith-mic}
\hyphenation{rational}
\hyphenation{theory theorist theorists theoretical theorem theorems}
\hyphenation{Thierauf}
\hyphenation{Wa-ta-nabe Ogi-hara Ogi-wara Mi-tsu-nori Sei-no-suke}
\hyphenation{Schear Heribert Eduardo}
\hyphenation{Sat-ter-thwaite}

\newcommand\qedblob{\mbox{\ding{113}}}
\def\qedsymbol{{\ \nolinebreak\hfill\mbox{\qedblob\quad}}\smallskip}

\newcommand{\naturals}{\mathbb{N}}
\newcommand{\integers}{\mathbb{Z}}

\newcommand{\capacity}{\mathit{cpc}}
\newcommand{\price}{\mathit{prc}}

\newcommand{\kbelection}{\ensuremath{(k,b)\hbox{-}\mathrm{election}}}

\newtheorem{theorem}{Theorem}[section]

\newtheorem{corollary}[theorem]{Corollary}
\newtheorem{definition}[theorem]{Definition}

\newcommand{\np}{\ensuremath{\mathrm{NP}}}

\begin{document}

\title{Nonuniform Bribery\thanks{%
Supported in part by grant NSF-CCF-0426761. A version of this paper appears as URCS-TR-2007-922.}
}

\date{November 30, 2007}

\author{Piotr Faliszewski \\ Department of Computer Science \\ University of Rochester \\ Rochester, NY 14627}

\maketitle
\begin{abstract}
  We study the concept of bribery in the situation where voters are
  willing to change their votes as we ask them, but where their prices
  depend on the nature of the change we request. Our model is an
  extension of the one of Faliszewski et
  al.~\cite{fal-hem-hem:c:bribery}, where each voter has a single
  price for any change we may ask for.  We show polynomial-time
  algorithms for our version of bribery for a broad range of voting
  protocols, including plurality, veto, approval, and utility based
  voting. In addition to our polynomial-time algorithms we provide
  $\np$-completeness results for a couple of our nonuniform bribery
  problems for weighted voters, and a couple of approximation
  algorithms for $\np$-complete bribery problems defined
  in~\cite{fal-hem-hem:c:bribery} (in particular, an FPTAS for
  plurality-weighted-\$bribery problem).
\end{abstract}

\section{Introduction}

Multiagent systems can often be viewed as artificial societies of
autonomous agents, with each agent having his/her own set of goals,
desires, and plans. Within such artificial societies, just like within
natural ones, it often becomes necessary for a group of agents to
arrive at a common decision (e.g., in a planning environment when no
single agent can solve his/her own problems, but where they are
capable of solving problems cooperatively). A very natural approach to
handling such situations is to hold an election.

Unfortunately, as is well known due to theorems of
Arrow~\cite{arr:b:polsci:social-choice}, Gibbard and
Satterthwaite~\cite{gib:j:polsci:manipulation,sat:j:polsci:manipulation},
and Duggan and Schwartz~\cite{dug-sch:j:polsci:gibbard} neither there
are ideal election systems nor there are ones that avoid giving agents
incentive to vote strategically. Bartholdi, Tovey, Trick, and
Orlin~\cite{bar-tov-tri:j:manipulating,bar-oli:j:polsci:strategic-voting,bar-tov-tri:j:control},
brilliantly observed that computational complexity of figuring out
voter's strategic behavior might be so high, as to perhaps be
enough of a barrier to prevent agents from attempting strategic actions.

Several scenarios of strategic behavior are considered in the
literature. In control we assume that the organizer of the election
attempts to modify their structure (e.g., via partitioning the voters
into districts) in order to obtain a result most desirable for
him/herself, see,
e.g.,~\cite{bar-tov-tri:j:control,fal-hem-hem-rot:c:llull,hem-hem-rot:j:destructive-control,pro-ros-zoh:c:multiwinner}.
In the case of manipulation, a coalition of voters calculates what vote
each member of the coalition should cast in order to obtain the result
the coalition desires, see,
e.g.,~\cite{bar-tov-tri:j:manipulating,bar-oli:j:polsci:strategic-voting,con-lan-san:j:few-candidates,con-san:c:voting-tweaks,con-san:c:nonexistence,hem-hem:j:dichotomy,pro-ros:j:juntas,pro-ros-zoh:c:multiwinner}

In the case of bribery, an external agent tries to ensure a victory of
one of the candidates via bribing some voters to change their
votes. Bribery was introduced in~\cite{fal-hem-hem:c:bribery} and
further studied in \cite{fal-hem-hem-rot:c:llull}.

All of these problems can be studied both in the constructive case
(where the goal is to ensure our favorite candidate's victory) and in
the destructive case (where we try to prevent a hated candidate from
winning). Also, in many situations it is natural to assume that
different voters have different weights (e.g., consider the
stockholders of a company or various parts of a multicriteria
decision-making system that, internally, performs an election between
its components, weighted based on components' confidence.)

In this paper we focus on the problem of bribery, introduced by
Faliszewski et al.~\cite{fal-hem-hem:c:bribery}. The authors of that
paper studied bribery in several scenarios, depending on the voting
system used and whether the voters were weighted and/or were assigned
price-tags for changing their votes. In particular, in the priced
cases they assumed that each voter is willing to change his/her vote
arbitrarily, provided that the briber pays the fixed-per-voter
price. Such an assumption is fairly unrealistic. For example, let us
consider an election with four candidates, $a$, $b$, $c$, and $d$. It
is easy to imagine a voter who, e.g., prefers option $a$ to $b$ to $c$
to $d$, but who actually really wants either $a$ or $b$ to win, with a
slight preference towards $a$, and who absolutely does not want either
of the remaining options to be chosen. Such a voter might be willing,
at a small price, to change his/her vote to rank $b$ first, but would
never, regardless of the money offered, change the vote to rank either
$c$ or $d$ first.

A different example where it is useful to model voters as having such
nonuniform prices is best seen from the point of view of the
briber.\footnote{The following discussion is not very technical; we
  defer technical issues to further sections.}  A briber that wants
some candidate $p$ to win, might want to follow a certain
\emph{policy} in his/her bribing. For example, he/she might not want
to bribe anyone to vote for $p$ in order not to cast ``bad light'' on
$p$. Such a briber would have to make $p$ a winner via bribes that
redistribute other candidate's support. Using the nonuniform model of
bribery one could express this policy via setting the prices for
voting for $p$ so high as to be outside of the allowed budget.

The above policy was studied in~\cite{fal-hem-hem:c:bribery} but,
naturally, different policies can easily be devised. (E.g., policies
preventing the briber from affecting some voters, or only allowing
him/her to bribe voters to change their votes in a limited way.)

Yet another scenario where nonuniform bribery model is useful regards
the issue of coalition formation. Consider an election where one of
the voters realizes that his/her option is very unlikely to win, but
where there are many agents (voters) that support options similar, but
slightly different. Such an agent might want to find out which of the
others, but as few as possible, he/she would have to convince to form
a coalition with him/her in order to have enough voting power as to
choose an option that all of them would be reasonably satisfied
with. One way to compute this set would be to: (a) find a group of
voters that currently vote for options similar to the agent's, (b)
form nonuniform bribery instance where those voters can be bribed at a
relatively low price to vote for the agent's option and all other
briberies are either very expensive or impossible (beyond budget), (c)
compute a minimum-cost nonuniform bribery that ensures that the
agent's favorite option wins. The voters involved in this bribery
would be candidates for the coalition.

In view of the above discussion we see that the issue of nonuniform
bribery is both important and useful, even in the cases where we are
not really ``bribing'' anyone, but simply are trying to strategically
plan our behavior. In this paper we give a number of results that show
that the problem of nonuniform bribery can often be solved in
polynomial time in the case of unweighted voters, yet easily becomes
$\np$-complete if the voters are weighted. We also give several
approximation algorithms for previously studied $\np$-complete bribery
problems.

\section{Preliminaries}
In this section we provide our notation, model of elections, and
briefly describe our main algorithmic tool, flow networks.

We view an election as a pair $(C,V)$, where $C$ is a set of
candidates, $C = \{c_1, \ldots, c_m\}$, and $V$ is a multiset of
voters, each represented via his/her preference over $C$. In this
paper we represent each voter's preference as an $m$-dimensional
vector of nonnegative integers, indicating voter's perceived utility
from electing each candidate. (Our model is inspired by the discussion 
of a similar notion in the paper by Elkind and Lipmaa~\cite{elk-lip:c:hybrid-manipulation}.)

\begin{definition}
  \label{def:kb-election}
  Let $k$ and $b$ be two positive integers. By a $\kbelection$ we mean
  an election over some candidate set $C = \{c_1, \ldots, c_m\}$,
  where each voter from the voter set $V$ distributes $k$ integral
  points among the candidates, never assigning more than $b$ points to
  a single candidate.\footnote{Note that if either $k$ is too big, or
    there aren't enough candidates then such an election is
    impossible.} In the end, the candidates with most points are the
  winners.

  A free-form $\kbelection$ is a $\kbelection$ where voters can choose
  not to use all of their points.
\end{definition}

The standard preference model studied in computational social choice
literature assumes that each voter has a strict linear order of
preference over all candidates. Our model is more appropriate
for utility-based voting, and is powerful enough to capture such
voting rules as plurality (as $(1,1)$-elections; in plurality
elections each voter assigns one point to his/her favorite candidate),
veto (as $(m-1,1)$-elections; in veto election each voter gives a
single point to everyone except for his/her most hated option),
approval (as free-form $(m,1)$-elections; in approval voting each
voter either approves or disapproves of each of the candidates and the
candidates with most approvals win), and $t$-approval (as
$(t,1)$-elections; as approval, but each candidate has to approve of
exactly $t$ candidates). Utility-based voting is a very attractive
concept and we believe it is interesting to study it in computational
context.

We use $(k,b)$-elections as our model of voting. It is particularly
nice for our nonuniform bribery as one can devise a reasonable pricing
scheme for voters; we assume that for each ordered pair of candidates
each voter has a, possibly different, price for moving a single point
from one candidate to the other. Similar schemes can be devised for
preference-order-based voting but those we came up with were neither
as convincing nor as elegant.

The main result of this paper, Theorem~\ref{thm:main}, follows via an
algorithm that uses min-cost flow problem as a crucial subroutine. We
now briefly describe the min-cost flow problem and define appropriate
notation.

A flow network is defined via a set of nodes $N = \{s, t, n_1, \ldots,
n_m\}$, where $s$ is the source and $t$ is the sink, a capacity
function $\capacity: N \times N \rightarrow \naturals$, and a cost
function $\price: N \times N \rightarrow \naturals$. We say that two
nodes, call them $u$ and $v$, are connected if $\capacity(u,v) >
0$. Note that $\capacity(u,v)$ does not need to equal
$\capacity(v,u)$; the channels connecting two nodes are
unidirectional. A function $f: N \times N \rightarrow \integers$ is a
flow in such a network if it satisfies the following constraints: (a)
$(\forall_{u,v \in N})[f(u,v) \leq \capacity(u,v)]$ (i.e., we do not
send flow beyond capacity), (b) $(\forall_{u,v\in N})[f(u,v) =
-f(v,u)]$, and (c) $(\forall_{u \in N - \{s,t\}})[\sum_{v \in N
  -\{u\}}f(u,v) = 0]$.  We interpret $f(u,v) = t$, $t > 0$, as $t$
units of flow traveling on the edge from $u$ to $v$. A
negative flow value indicates reversed direction of travel. Note that
units of flow can travel in both directions at the same time, but it
is unnecessary.  The value of a flow is defined as $\sum_{u \in N -
  \{s\}}f(s,u)$ and its cost is $\sum_{u,v \in N \mid f(u,v)\geq 0}
\price(u,v)f(u,v)$. That is, for each two nodes $u,v \in N$ we pay
$\price(u,v)$ for sending each single unit of flow from $u$ to $v$.

In the min-cost flow problem we are given a nonnegative integer $K$, a
flow network, i.e., a set of nodes with source and sink nodes, a
capacity function, and a cost function, and our goal is to find a
minimum-cost flow of value $K$ (or to indicate that such a flow
doesn't exist). It is well known that this problem is solvable in
polynomial-time and we point interested readers to an excellent
monograph of Ahuja et al.~\cite{ahu-mag-orl:b:flows}.

Finally, we explain what we mean by a fully polynomial-time
approximation scheme (FPTAS). Due to space constraints the definition
is very brief. An algorithm $A$ is an FPTAS for a given minimization
problem (e.g., a problem of finding minimum-cost bribery) if (a)
algorithm $A$ on input $(I,\varepsilon)$, where $I$ is an instance of
the problem and $\varepsilon$ is a positive real number, $0 <
\varepsilon < 1$, runs in time polynomial in the size of $I$ and
$\frac{1}{\varepsilon}$, and (b) $A$ has the property that if $O$ is
the value of the optimal solution for $I$ then $A(I,\varepsilon)$
produces a solution with value $S$ such that $S \leq (1 +
\varepsilon)O$. We stress that algorithm $A$ has to output a correct
solution for the problem, only that the cost of this solution may be
somewhat above the optimum.

\section{Nonuniform Bribery}
In this section we define our nonuniform bribery problem for
$(k,b)$-elections and present our results.

Let $k$ and $b$ be two positive integer values (possibly dependent on
the number of candidates and voters in $E$; see the next sentence). We
define $(k,b)$-bribery problem as follows: The input is a
$(k,b)$-election $E$ with candidate set $C = \{c_1,\ldots, c_m\}$ and
a multiset $V$ of voters $v_1, \ldots, v_n$, where each voter $v_i$ is
represented via an $m$-dimensional integer vector describing in an
obvious way how many points $v_i$ assigns to which candidates, a
nonnegative integer $B$ (the budget), and for each voter $v_i$ a price
function
\[\pi_i: C \times C \rightarrow \naturals.\]
A unit bribery involves asking some voter $v_\ell \in V$ to move a
single point that $v_\ell$ currently assigns to some candidate $c_i$
to another candidate $c_j$.  The cost of such a unit bribery is
$\pi_\ell(c_i,c_j)$. (Naturally, for each $\ell \in \{1,\ldots, n\}$
and each candidate $c_i$ we have $\pi_\ell(c_i,c_i) = 0$.) The
question is if given the budget $B$, this $(k,b)$-election $E$, and
functions $\pi_1, \ldots, \pi_n$ (one for each voter), it is possible
to perform a set of unit briberies of total cost at most $B$, such
that (a) preferred candidate $p = c_1$ becomes a winner, and, (b) each
voter assigns at most $b$ points to each candidate (i.e., the rules of
a $(k,b)$-election are not broken). We explicitly require that all the
unit briberies are executed ``in parallel,'' that is, a briber cannot
first bribe voter $v_\ell$ to move a point from some candidate $c_i$
to another candidate $c_j$ and afterward move that same point from
$c_j$ to yet another candidate $c_q$.\footnote{We could allow such
  sequential operations and all our results would hold via proofs
  similar to those presented here, but we feel that the model of
  parallel unit briberies is more appropriate.} (Though, it would
still be legal to move to $c_q$ a point that $v_\ell$ had assigned to
$c_j$ before the bribery. Of course, points are anonymous, unnamed,
entities; our requirement of not moving ``the same'' point twice
formally means that briberies are only legal if for each voter
$v_\ell$ they move, within the preference vector of that voter, at
most as many points away from each candidate as that candidate
had been assigned by $v_\ell$ before bribery.)

We define free-form $(k,b)$-bribery problem analogously, only that
each voter can choose to not assign some of his/her points to any
candidate and we can bribe voters to either give those unassigned
points to some candidate, or to remove points from a given candidate
and not assign them to anyone else. In order to accommodate this
possibility we extend the price functions $\pi_\ell$ to be mappings
from $\hat{C} \times \hat{C}$ to $\naturals$, where $\hat{C} = C \cup
\{\varepsilon\}$. $\varepsilon$ represents the slot for unassigned
points. Naturally, each voter can have as many unassigned points as
he/she wishes to (i.e., the $b$-bound does not apply to the unassigned
points).

We now give our main result regarding $(k,b)$-bribery.

\begin{theorem}
  \label{thm:main}
  There is an algorithm that solves both $(k,b)$-bribery instances and
  free-form $(k,b)$-bribery instances in time polynomial in $k$ and
  the size of the instance.
\end{theorem}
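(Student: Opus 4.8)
The plan is to reduce each $(k,b)$-bribery instance, in polynomial time, to polynomially many min-cost flow problems. Fix the instance: an election with candidates $C=\{c_1,\dots,c_m\}$, preferred candidate $p=c_1$, voters $v_1,\dots,v_n$ with initial point counts $a_{\ell,i}$ (so $\sum_i a_{\ell,i}=k$), price functions $\pi_1,\dots,\pi_n$, and budget $B$. After any legal bribery each voter assigns at most $\min(b,k)$ points to $p$, so the final score of $p$ lies in $\{0,1,\dots,n\min(b,k)\}$; call such a number a \emph{target} $s$. The point is that $p$ is a winner after a bribery if and only if there is a target $s$ with $\mathrm{score}(p)\ge s$ and $\mathrm{score}(c_j)\le s$ for all $j\ge 2$ (take $s$ to be the actual final score of $p$). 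Hence it suffices, for each of the $n\min(b,k)+1$ targets $s$, to compute the minimum cost of a legal bribery achieving ``$\mathrm{score}(p)\ge s$ and $\mathrm{score}(c_j)\le s$ for all $j\ge 2$,'' and to answer ``yes'' exactly when the smallest of these values is at most $B$.

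For a fixed target $s$ I would build the following flow network. Besides $S$ (source) and $T$ (sink) it has a node $x_{\ell,i}$ and a node $y_{\ell,i}$ for each voter $v_\ell$ and candidate $c_i$, a node $z_j$ for each candidate $c_j$, and one extra node $W$. The arcs are: $S\to x_{\ell,i}$ with capacity $a_{\ell,i}$ and cost $0$; $x_{\ell,i}\to y_{\ell,j}$ with capacity $k$ and cost $\pi_\ell(c_i,c_j)$ (so the ``stay put'' arc $x_{\ell,i}\to y_{\ell,i}$ costs $0$); $y_{\ell,j}\to z_j$ with capacity $b$ and cost $0$; for $j\ge 2$, $z_j\to W$ with capacity $s$ and cost $0$, together with $W\to T$ with capacity $nk-s$ and cost $0$; and $z_1\to T$ with capacity $nk$ and cost $0$. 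I then compute a minimum-cost flow of value exactly $nk$ (the total number of points in the election); if none exists the target $s$ is infeasible, and otherwise the flow's cost is the quantity sought. Since the total capacity out of $S$ is $\sum_{\ell,i}a_{\ell,i}=nk$, a flow of value $nk$ saturates every arc $S\to x_{\ell,i}$, so each of the $a_{\ell,i}$ points initially on $c_i$ in $v_\ell$ is routed to some destination slot $y_{\ell,j}$; reading $f(x_{\ell,i},y_{\ell,j})$ as ``number of points $v_\ell$ moves from $c_i$ to $c_j$'' recovers a bribery, its cost equals the flow's cost, the new count of $c_j$ in $v_\ell$ is $f(y_{\ell,j},z_j)\le b$, and the total score of $c_j$ is the flow through $z_j$.

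Correctness of this correspondence is the crux. First, any value-$nk$ flow yields a \emph{legal} bribery: the number of points moved off $c_i$ in $v_\ell$ is $\sum_{j\ne i}f(x_{\ell,i},y_{\ell,j})\le\sum_j f(x_{\ell,i},y_{\ell,j})=a_{\ell,i}$, so the ``no point moved twice''/parallel requirement holds automatically, the $b$-bound is forced by the $y_{\ell,j}\to z_j$ capacities, the capacities $z_j\to W$ give $\mathrm{score}(c_j)\le s$ for $j\ge 2$, and since a value-$nk$ flow sends at most $nk-s$ units through $W$ it sends at least $s$ through $z_1$, i.e.\ $\mathrm{score}(p)\ge s$. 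Conversely, any legal bribery meeting target $s$ can be turned into a value-$nk$ flow of equal cost by routing each moved point along the corresponding $x_{\ell,i}\to y_{\ell,j}$ arc and every un-moved point along $x_{\ell,i}\to y_{\ell,i}$; hence the min-cost flow value for target $s$ is exactly the desired minimum bribery cost. The free-form variant is handled by adding, for each voter $v_\ell$, nodes $x_{\ell,\varepsilon},y_{\ell,\varepsilon}$ for the unassigned-points slot $\varepsilon$: $S\to x_{\ell,\varepsilon}$ gets capacity $k-\sum_i a_{\ell,i}$, we add arcs $x_{\ell,i}\to y_{\ell,\varepsilon}$ (cost $\pi_\ell(c_i,\varepsilon)$), $x_{\ell,\varepsilon}\to y_{\ell,j}$ (cost $\pi_\ell(\varepsilon,c_j)$), and $x_{\ell,\varepsilon}\to y_{\ell,\varepsilon}$ (cost $0$) with capacity $k$, and an arc $y_{\ell,\varepsilon}\to T$ with capacity $k$ and cost $0$ (no $b$-bound applies to $\varepsilon$ and unassigned points count toward no score); everything else, including the flow value $nk=\sum_\ell k$ and the iteration over $s$, is unchanged.

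Finally, each of the $O(n\min(b,k))=O(nk)$ networks has $O(nm)$ nodes and $O(nm^2)$ arcs with capacities bounded by $nk$ and costs bounded by the largest $\pi_\ell$-value, so each min-cost flow computation (solvable in polynomial time, as recalled above) runs in time polynomial in $n$, $m$, $k$, and the instance size, and there are polynomially many of them; thus the whole algorithm runs in time polynomial in $k$ and the size of the instance. I expect the main obstacle to be exactly the network design and its verification: one must see that the ``within one voter, cheaply re-route the moved points'' problem is itself a tiny min-cost flow that the $x_{\ell,\cdot}\to y_{\ell,\cdot}$ layer solves for free, that the acyclic layered structure together with the $S$-arc and $y_{\ell,j}\to z_j$ capacities simultaneously encodes the parallel-bribery constraint and the per-voter $b$-bound, and that the inherently non-local condition ``$p$ beats everyone'' can be split into the two capacity bounds $\mathrm{score}(c_j)\le s$ and $\sum_{j\ge 2}\mathrm{score}(c_j)\le nk-s$ via the auxiliary node $W$, once one iterates over $s$.
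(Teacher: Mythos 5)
Your construction for ordinary $(k,b)$-bribery is correct and is essentially the paper's approach: iterate over a target score for $p$ and, for each target, solve a min-cost flow problem on a layered network with per-voter ``before'' and ``after'' nodes, per-candidate aggregator nodes, edge costs $\pi_\ell(c_i,c_j)$ encoding unit briberies, source capacities encoding the initial point counts (which automatically yields the parallel-bribery constraint), and capacity $b$ enforcing the per-voter bound. The one genuine difference is how ``$p$ wins'' is enforced: the paper caps every candidate's sink edge at $K$ and puts a huge penalty cost $T$ on the non-$p$ sink edges, so that a min-cost flow of value $kn$ pushes as many units as legally possible through $p$'s node; you instead funnel all non-$p$ flow through the bottleneck node $W$ with $\capacity(W,T)=nk-s$, which forces $\mathrm{score}(p)\ge s$ purely by flow conservation, with no penalty costs. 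For the non-free-form case your variant is clean and correct.

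However, your free-form extension breaks exactly at this point, and the theorem explicitly covers free-form instances. By adding the arcs $y_{\ell,\varepsilon}\to T$ you let unassigned points bypass both $z_1$ and $W$, so the argument ``flow value $nk$ and $\capacity(W,T)=nk-s$ imply at least $s$ units go through $z_1$'' no longer holds; the network now only enforces $\mathrm{score}(c_j)\le s$ for $j\ge 2$ together with $\sum_{j\ge2}\mathrm{score}(c_j)\le nk-s$, neither of which bounds $\mathrm{score}(p)$ from below. Concretely, take a free-form $(2,1)$-election with candidates $p,c_2,c_3$, one voter who gives one point to $c_2$ and leaves one point unassigned, budget $B=0$, and all nontrivial bribery prices equal to $1$. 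Then $p$ cannot be made a winner, but for target $s=1$ the zero-cost ``do nothing'' flow (the $c_2$-point via $z_2\to W\to T$, the unassigned point via $y_{1,\varepsilon}\to T$) has value $nk=2$ and cost $0\le B$, so your algorithm would answer ``yes.'' The fix is small: route the after-bribery unassigned slots into $W$ as well (i.e., replace $y_{\ell,\varepsilon}\to T$ by $y_{\ell,\varepsilon}\to W$ with suitable capacity), or equivalently impose a lower bound of $s$ on the arc $z_1\to T$, or fall back on the paper's penalty-cost device, which extends to the free-form setting without change since unassigned units are simply treated like units on non-$p$ candidates. As written, though, the free-form half of your proof does not establish the statement.
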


We devote a large chunk of the remainder of this section to proving
Theorem~\ref{thm:main}. However, some discussion is in order before we
jump into the proof. In particular, note that the running time of our
algorithm is polynomial not in the size of $(k,b)$-bribery instance,
but in the size of the instance \emph{and} the value $k$.  This means
that if the value of $k$ is large then the algorithm runs very
slowly. However, for many interesting cases (all that we have
mentioned, e.g., plurality, veto, utility based voting where $k$ is
bounded by some polynomial in the number of candidates) our result
guarantees polynomial-time solution for this nonuniform bribery
problem.

\noindent\textbf{Proof of Theorem~\ref{thm:main}}\quad
  Due to length restrictions, we only provide a proof sketch for the
  case of $(k,b)$-bribery, and not for the free-form variant.

  Our input is a $(k,b)$-election $E$, with candidate set $C = \{c_1,
  \ldots, c_m\}$ and voter multiset $V = \{v_1, \ldots, v_n\}$, a
  nonnegative integer $B$  (the budget), and voters' price functions
  $\pi_1, \ldots, \pi_n$. Our goal is to ensure that our distinguished
  candidate $p = c_1$ is a winner of the election via a bribery of
  cost at most $B$.

  Our proof follows via constructing a series of flow networks and
  computing min-cost solutions for them. The intuition here is that
  the points that voters assign to candidates are modeled via the
  units of flow traveling in the network. We design our networks in
  such a way that minimizing the cost of the flow, in essence,
  maximizes the number of points our designated candidate $p = c_1$
  via a minimum-cost bribery.

  We know that each candidate can at most receive $kn$ points. For
  each nonnegative integer $K$ between $1$ and $kn$ our algorithm
  tests if there is a bribery of cost at most $B$ that ensures that
  $p$ receives exactly $K$ points and every other candidate receives
  at most $K$ points. Let us now fix a value of $K$ and show how such
  a test can be executed.

  We form a network flow with node set $N = \{s,t\} \cup
  \bigcup_{i=1}^n C_i \cup \bigcup_{i=1}^n C'_i \cup F$, where $F =
  \{f_1, \ldots, f_m\}$ and for each $i \in \{1,\ldots, n\}$ we have
  $C_i = \{c_{i1}, \ldots, c_{im}\}$, $C'_i = \{c'_{i1}, \ldots,
  c'_{im}\}$. For each $i \in \{1,\ldots ,n\}$, nodes in the sets
  $C_i$ represent point distribution of voter $v_i$ before bribery,
  nodes in $C'_i$ represent point distribution of voter $v_i$ after
  the bribery, and nodes in $F$ are used to enforce the rules of
  $(k,b)$-election and to sum points for all candidates.

  We introduce the following capacities and costs for edges in our
  network. Assume that all unmentioned edges have capacity $0$. For
  each voter $v_\ell$ and candidate $c_i$ we have $\capacity(s,c_{\ell
    i})$ equal to the number of points $v_\ell$ assigns to $c_i$
  before bribery and $\price(s,c_{\ell i}) = 0$.  These edges model
  delivering appropriate number of points to nodes from sets $C_1,
  \ldots, C_n$.

  For each node $c_{\ell i}$ and each candidate $c_j$ we have
  $\capacity(c_{\ell i},c'_{\ell j}) = k$ and $\price(c_{\ell i},c_{\ell
    j}) = \pi_\ell(i,j)$. These edges model unit briberies. The briber
  can ask each voter to move his points as the briber likes, but pays
  appropriate price for moving each point.

  For each node $c'_{\ell i}$ we set $\capacity(c'_{\ell i},f_i) = b$ and
  $\price(c'_{\ell,f_i}) = 0$. These edges enforce that no candidate can
  receive more than $b$ points from a single voter. Finally, for each
  node $f_i$, we have $\capacity(f_i,t) = K$, $\price(f_1,t) = 0$, and for all $i
  \in \{2, \ldots, m\}$ we have $\price(f_i,t) = T$, where $T$ is an
  integer higher than the cost of any possible bribery (e.g., take $T
  = 1+kn\max_{\ell,i,j}\pi_\ell(c_i,c_j)$).

  To perform our test we compute minimum-cost flow of value $kn$ in
  this network.  If such a flow doesn't exist then it means that there
  is no legal way of distributing points via bribery in such a way
  that each voter has score at most $K$. In such a case we disregard
  this value of $K$ and continue with the next one.  We can interpret
  our minimum-cost flow as follows: Each set of nodes $C_1, \ldots,
  C_n$ receives units of flow corresponding to the distribution of
  points of each voter (because flow has value $kn$ and because of the
  capacities of edges connecting the source with nodes in $C_1,
  \ldots, C_n$) and units of flow travel from nodes in sets $C_i$ to
  nodes in sets $C'_i$, respectively, modeling our bribery. Finally,
  they all accumulate in nodes from set $F$, from where they all reach
  the sink. The nature of edges between nodes in $C'_i$'s and nodes in
  $F$ guarantees that we arrive with a legal distribution of points
  for each voter. The cost of this flow can be expressed as $T\cdot (
  kn - \mbox{$p$'s score after bribery}) +
  \mathrm{cost\hbox{-}of\hbox{-}bribery}$. Since $T$ is chosen to be
  larger than any possible cost of bribery, we know that minimum cost
  enforces that node $f_1$, corresponding to $p = c_1$, receives as
  many units of flow as legally possible. If $p$ can legally receive
  $K$ units of flow then minimum-cost flow delivers this many units;
  the capacity of the edge linking $p = c_1$ with the sink is $K$ so the
  flow cannot deliver more. If the flow cannot legally deliver $K$
  units of flow to $c_1$ then we can safely disregard this network
  (because we have already handled this flow when analyzing smaller
  values of $K$). Interpretation of our flow gives that our bribery
  guarantees that $p$ gets exactly $K$ points and all the other
  candidates receive at most $K$ points each (as each node $f_2, \ldots,
  f_n$ can only deliver $K$ units of flow to the sink).  This is
  achieved via a minimum cost bribery as the cost-of-bribery is the
  remaining part of the cost of our flow, after we consider the
  payment for delivering units of flow to the sink.

  This way we can test in polynomial-time for each $K$ if there is a
  nonuniform bribery of cost at most $B$ that ensures that $p$
  receives exactly $K$ points and all other candidates receive at most
  $K$ points.  Thus, in total, the running time of our algorithm is
  polynomial in the size of our election and $k$.~\qedsymbol

Since we have shown that by proper choice of the values $k$ and $b$ we
can express plurality, veto, approval, $t$-approval, and
utility-based voting (where the number of points to distribute is
bounded by a polynomial in the number of candidates), we have the
following corollary.

\begin{corollary}
  Nonuniform bribery is solvable in polynomial time for the following
  election systems: plurality, veto, approval, $t$-approval,
  utility-based voting where the number of points each voter can
  distribute is polynomial in the number of candidates.
\end{corollary}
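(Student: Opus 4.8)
The plan is to obtain the corollary as an immediate consequence of Theorem~\ref{thm:main}, by instantiating the parameters $k$ and $b$ appropriately for each of the five election systems and checking that in every case $k$ is bounded by a polynomial in the size of the input instance. Theorem~\ref{thm:main} gives an algorithm whose running time is polynomial in $k$ and in the instance size, so once such a polynomial bound on $k$ is established the polynomial-time claim follows.

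Concretely, I would use the correspondences already recorded in Section~2: plurality is the class of $(1,1)$-elections, veto is the class of $(m-1,1)$-elections, $t$-approval is the class of $(t,1)$-elections with $t\le m$, approval is the class of free-form $(m,1)$-elections, and utility-based voting with a polynomially bounded point budget is, by definition, the class of $(k,b)$-elections with $k$ polynomial in $m$. In each of these cases the relevant value of $k$ lies in $\{1,\ m-1,\ m,\ t,\ \mathrm{poly}(m)\}$ and is therefore polynomially bounded in the number of candidates, hence in the size of the instance; the algorithm of Theorem~\ref{thm:main} (in its free-form variant where needed, e.g.\ for approval) then runs in polynomial time.

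The one step that requires a little care — and where the content of the reduction lies — is verifying that \emph{nonuniform} bribery for each named system, together with its pricing, is faithfully captured by the corresponding (free-form) $(k,b)$-bribery problem. I would check: a voter's approved/vetoed set (or utility vector) is exactly a point distribution respecting the $b$-bound; the winner sets agree because in all cases the winners are the candidates with the most points; and a unit bribery realizes the intended atomic change with the intended price — for plurality, redirecting a voter's single point is precisely changing which candidate that voter votes for, priced by $\pi_\ell$; for veto the $b=1$ bound forces any unit bribery to move a point onto the currently vetoed candidate, i.e.\ to shift that voter's veto; for $t$-approval moving a point swaps one candidate out of and one into the approved set; and for approval the free-form $\varepsilon$-slot lets a unit bribery add or drop a single approval. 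I expect no genuine difficulty here, only the routine bookkeeping of matching the $(k,b)$-model's operations and prices to each system's; after that, invoking Theorem~\ref{thm:main} completes the proof.
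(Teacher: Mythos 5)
Your proposal matches the paper's argument: the corollary is obtained exactly as an instantiation of Theorem~\ref{thm:main}, using the correspondences plurality $=(1,1)$, veto $=(m-1,1)$, $t$-approval $=(t,1)$, approval $=$ free-form $(m,1)$, and utility-based voting with polynomially bounded $k$, noting that in each case $k$ is polynomial in the instance size. Your extra bookkeeping check that the systems' bribery operations and prices are faithfully captured is left implicit in the paper but is the same routine verification, so the approaches coincide.
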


It is natural to ask if the above two results hold for the case of
weighted voters. Let $(k,b)$-weighted-bribery and free-form
$(k,b)$-weighted bribery be the analogs of $(k,b)$-bribery and
free-form $(k,b)$-bribery for the case when voters have weights.
Unfortunately, these problems are $\np$-complete even for very
restricted values of $k$ and $b$. In particular, the following result
holds.

\begin{theorem}
  \label{thm:npcom}
  $(1,1)$-weighted-bribery is $\np$-complete.
\end{theorem}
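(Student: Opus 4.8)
The plan is to prove $\np$-completeness of $(1,1)$-weighted-bribery (constructive bribery in plurality with weighted voters) by reduction from \textsc{Partition}. Membership in $\np$ is immediate: a bribery is a polynomial-size list of unit point-moves, and checking that its total cost is at most $B$ and that $p=c_1$ wins afterwards is clearly polynomial-time (in a $(1,1)$-election each voter has exactly one point, so a unit bribery of voter $v_\ell$ just changes which candidate $v_\ell$ votes for). So the content is $\np$-hardness.

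Recall an instance of \textsc{Partition} is a multiset of positive integers $a_1,\ldots,a_n$ with $\sum_i a_i = 2S$, asking whether some sub-multiset sums to exactly $S$. I would build a $(1,1)$-election with three candidates, $p=c_1$ (our preferred candidate), $r$, and $d$ (a ``dump'' candidate), together with a few distinguished heavy voters and $n$ ``gadget'' voters $v_1,\ldots,v_n$ with weights $a_1,\ldots,a_n$. Each $v_i$ initially votes for $r$. I set up the fixed voters so that, before any bribery, $p$ and $r$ each have the same score $S$ from the fixed block, $v_i$'s contribute $2S$ more to $r$ (so $r$ currently leads), and $d$ is irrelevant. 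The price functions $\pi_i$ for the gadget voters are chosen so that for each $v_i$ it costs $1$ to move the point from $r$ to $p$, but moving the point from $r$ to $d$ (or anything else) is prohibitively expensive (above budget). The budget is $B = n$ (or just large enough to bribe every gadget voter once), and I would then tune the fixed-voter scores so that $p$ becomes a winner if and only if the set of bribed gadget voters has weight \emph{exactly} $S$: bribing a set $T$ moves weight $w(T)$ from $r$ to $p$, giving $p$ a score of $S + w(T)$ and $r$ a score of $S + 2S - w(T) = 3S - w(T)$; the condition $S+w(T) \ge 3S - w(T)$ is $w(T) \ge S$, while a budget/capacity constraint forces $w(T)\le S$ — e.g., make bribing voter $v_i$ cost exactly $a_i$ and set $B=S$, so affordable briberies are exactly those with $w(T)\le S$. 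Then $p$ wins iff $w(T)=S$ iff \textsc{Partition} has a solution. (Using $d$ and the irrational-style high prices also blocks the briber from "wasting" moves in a way that could help $r$ lose its lead for free; one should double-check no alternative bribery — e.g. bribing a fixed voter — circumvents the intended correspondence, which is why the fixed voters should be given price functions that are prohibitive everywhere.)

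The main obstacle, as usual with these reductions, is controlling all the \emph{unintended} briberies: I must make sure that within budget the \emph{only} useful action is to move gadget-voter points from $r$ to $p$, so that the achievable score vectors for $p$ are exactly $\{S + w(T) : T \subseteq \{v_1,\ldots,v_n\},\ w(T)\le S\}$, with $r$ forced to $3S - w(T)$ and $d$ never a threat. This is handled by the nonuniform pricing — every price except "$v_i$: $r\to p$" is set above $B$ — which is precisely the feature of the nonuniform model that makes the reduction clean. After that, verifying the equivalence "$p$ is a winner after an affordable bribery $\iff$ \textsc{Partition} is a yes-instance" is the routine arithmetic sketched above, and the construction is clearly polynomial-time (with weights written in binary), completing the proof. \qedsymbol
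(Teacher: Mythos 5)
Your proof is correct, but it takes a genuinely different route from the paper. The paper dispatches hardness in two lines by reducing from plurality-weighted-negative-bribery, already shown $\np$-complete by Faliszewski et al.: it keeps the same election and budget, gives every unit bribery cost $1$, and prices any bribery that would give a point to $p$ at $B+1$, so the nonuniform prices simply encode the ``never vote for $p$'' policy. You instead give a self-contained reduction from \textsc{Partition}: fixed (unbribable) voters give $p$ and $r$ a base score $S$ each, gadget voter $v_i$ of weight $a_i$ votes for $r$ and can only be bribed $r \to p$ at cost $a_i$, all other prices exceed the budget $B = S$; then an affordable bribery of a set $T$ yields scores $S + w(T)$ for $p$ and $3S - w(T)$ for $r$, so $p$ wins iff $w(T) \ge S$ while the budget forces $w(T) \le S$, i.e.\ iff $w(T) = S$. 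That arithmetic checks out (including the tie case, since tied candidates are winners), the extra candidate $d$ is harmless but unnecessary, and membership in $\np$ is as routine as you say. Do note two presentational points: the interim variant in your paragraph (unit costs with $B = n$) does \emph{not} work, since then bribing all gadget voters always makes $p$ win --- only your final choice (cost $a_i$, $B = S$) is the construction to keep; and you should handle the trivial case of odd $\sum_i a_i$ separately. The trade-off between the two proofs: the paper's argument is shorter and also sets up its later corollary that no FPTAS exists (an FPTAS would solve negative-bribery exactly), but it leans on an external hardness result; yours is elementary and self-contained, at the price of building and verifying a gadget, and it equally showcases how nonuniform prices (blocking fixed voters and the $r \to d$ moves) make such reductions clean.
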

\begin{proof}
  Faliszewski et al.~\cite{fal-hem-hem:c:bribery} showed that what
  they call plurality-weighted-negative-bribery problem is
  $\np$-complete. This problem is defined as follows: Given a weighted
  plurality election $E = (C,V)$ with a distinguished candidate $p$
  and a nonnegative integer $B$, is it possible to pick up to $B$
  voters in $V$ and change their votes so that neither of them ranks
  $p$ first yet $p$ is a winner of the election? It is easy to see
  that this problem reduces to $(1,1)$-weighted-bribery. Given an
  instance of plurality-weighted-negative-bribery we form an instance
  of $(1,1)$-weighted-bribery with the same election $E$ (it is
  trivial to convert preferences expressed as linear orders to
  appropriate vectors in this case), the same budget $B$, and where
  each unit bribery has cost $1$, except for unit briberies that would
  give a point to $p$. Those cost $B+1$. The reader should convince
  him/herself that this reduction works in polynomial time and is
  correct.

  On the other hand, $(1,1)$-weighted-bribery is trivially in
  $\np$.\end{proof}

We could similarly show that $(m-1,1)$-weighted-bribery, where $m$ is
the number of candidates in the election, is $\np$-complete simply via
invoking the fact that veto-weighted-bribery is
$\np$-complete~\cite{fal-hem-hem:c:bribery}.

Together with Theorem~\ref{thm:npcom} this shows that in general we
should not expect efficient algorithms for $(k,b)$-weighted-bribery
that work for all values of $k$ and $b$. However, there might be
interesting algorithms for special cases.

In particular, we focus on the restriction of $(1,1)$-weighted-bribery
to the case where each voter has a single price for moving his/her
point to any of the candidates. This problem is called
plurality-weighted-\$bribery in~\cite{fal-hem-hem:c:bribery} and we
will use this name.  We also consider a restriction of free-form
$(m,1)$-bribery, where $m$ is the number of candidates, to a situation
where each voter has for each candidate a price of flipping the
support for that candidate. This restriction is equivalent to what
in~\cite{fal-hem-hem:c:bribery} is called
approval-weighted-\$bribery$'$. Again, we will use the name
from~\cite{fal-hem-hem:c:bribery}.

The following theorem shows that these two special cases of
(free-form) $(k,b)$-weighted-bribery can be solved efficiently using
approximation algorithms.

\begin{theorem}
  plurality-weighted-\$bribery and approval-weighted-\$bribery$'$ both
  have fully polynomial-time approximation schemes (FPTAS).
\end{theorem}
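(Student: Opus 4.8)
The plan is, for each of the two problems, to first give an exact algorithm whose running time is pseudo-polynomial --- polynomial in the magnitudes of the voters' weights rather than in their binary lengths --- and then to convert it into an FPTAS by the usual rounding technique. I will describe the plurality case in some detail; the approval case is analogous and its structure decomposes even more transparently.

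For plurality-weighted-\$bribery I would first pin down the shape of an optimal bribery. Bribing a voter who already votes for $p$ achieves nothing, and for any voter $v$ who votes for some $c_i \neq p$ the change ``make $v$ vote for $p$'' weakly dominates every other rewriting of $v$'s vote, since it simultaneously raises $p$'s score by $v$'s weight and lowers $c_i$'s. So an optimal bribery simply selects a set $S$ of non-$p$ voters and moves them all to $p$; writing $s_j$ for $c_j$'s current score, $V_j$ for the voters who originally vote $c_j$, and $w(v)$ for $v$'s weight, the task is to minimize $\sum_{v \in S}\pi_v$ subject to $s_1 + \sum_{v\in S}w(v) \geq s_i - \sum_{v \in S \cap V_i}w(v)$ for every $c_i$ (recall $p = c_1$). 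I would then guess $p$'s final score $Q$; since that score lies between $s_1$ and $s_1 + \sum_{v \notin V_1}w(v)$, there are only $1 + \sum_{v \notin V_1}w(v)$ possibilities. For a fixed $Q$ it suffices to find a cheapest $S$ for which the weight moved away from each $c_i$ is at least $\max(0, s_i - Q)$ and the total weight moved is at least $Q - s_1$ --- moving more weight than strictly required never violates any of the inequalities above, and conversely the true optimum is feasible for $Q$ equal to the final score it produces. This is a minimum-cost covering-knapsack problem with a lower bound attached to each candidate's block of voters, solvable by a two-level dynamic program: an inner knapsack-style table per candidate (cheapest way to remove each possible amount of weight from that candidate), assembled by an outer dynamic program over the candidates that tracks the total weight removed so far. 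The running time is polynomial in the number of voters and in $\sum_v w(v)$; ranging over $Q$ and keeping the cheapest outcome solves the problem exactly in pseudo-polynomial time. (For two candidates this is just minimum knapsack; the guessing of $Q$ is what handles several candidates at once.)

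To pass to an FPTAS one rounds the weights so that $\sum_v w(v)$ --- and hence the range of $Q$ and the sizes of the tables --- becomes polynomial in the instance size and $1/\varepsilon$, and re-runs the algorithm. For approval-weighted-\$bribery$'$ the same scheme works: there the elementary bribe actions decouple --- flipping a voter's stance on $p$ only helps $p$, flipping a voter's stance on $c_i$ only hurts $c_i$, and the two are priced separately --- so for a fixed target $Q$ the problem splits into independent minimum-cost covering-knapsack instances (one to lift $p$'s score to $Q$, one per candidate to push it down to $Q$), each of which has the textbook knapsack FPTAS, and $Q$ and the weight rounding are treated exactly as above. The part I expect to demand real care --- the hard part --- is this rounding step: the optimization has to return a bribery that genuinely makes $p$ a winner for the \emph{original} weights, and whether $p$ wins can hinge on an exact equality of scores, so a careless uniform scaling of the weights can destroy feasibility and inflate the cost simultaneously. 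The rounding therefore has to be done conservatively --- treating large-weight voters essentially exactly (there are only polynomially many once the scaling granularity is fixed) and approximating only the bulk effect of the many small-weight voters, and/or carrying a small safety margin that is afterward absorbed into the $(1+\varepsilon)$ slack --- and it is this feasibility-preserving analysis, not the dynamic programming, that is the crux.
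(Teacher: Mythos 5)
There is a genuine gap, and it sits exactly where you flagged it: the conversion of your pseudo-polynomial algorithm into an FPTAS. Your exact dynamic program is pseudo-polynomial in the voter \emph{weights}, and you then propose to round the weights. But in these bribery problems the $(1+\varepsilon)$ slack is available only in the objective (the total price paid); the constraint ``$p$ is a winner under the original weights'' must hold exactly, and it has no slack to absorb rounding error. A solution that is optimal for rounded weights need not be feasible for the true weights, and the ``conservative'' fixes you sketch do not repair this: the extra cost of buying a weight safety margin cannot be charged to the optimum, because prices and weights are unrelated --- the optimum might consist of a single very cheap voter whose weight achieves the required score exactly, while any additional weight needed to cover a margin could cost arbitrarily more. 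Treating the polynomially many large-weight voters exactly does not help either, since the cumulative rounding error over the many small-weight voters can still flip a score comparison, and covering that error again costs an unbounded amount relative to OPT. In short, you are rounding the wrong quantity; weight rounding in a covering-type problem yields at best a bicriteria (approximately feasible) guarantee, not an FPTAS.

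The paper's proof rounds the \emph{prices} instead. It starts from the algorithms of Faliszewski et al.\ for plurality-weighted-\$bribery and approval-weighted-\$bribery$'$ with unary-encoded prices, i.e., exact algorithms that are pseudo-polynomial in the largest price (weights are handled exactly there), and then scales the prices down so they are polynomial in $N$ and $1/\varepsilon$. Since the weights are untouched, every solution of the scaled instance is feasible for the original instance, and the only loss is in cost, which is exactly where the $\varepsilon$ belongs. One further wrinkle the paper handles (and which a single scaling by the maximum price would miss) is that very expensive voters irrelevant to the optimum could wash out the small prices; the paper therefore iterates over geometrically increasing guesses $t$ of the largest price used in the optimal solution, declaring voters above the threshold unbribeable, and keeps the cheapest stored solution, yielding a $(1+2\varepsilon)$ guarantee. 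To salvage your write-up you would need either to invoke those unary-price algorithms, or to flip your own DP to run over (rounded) cost while tracking weight exactly --- i.e., adopt the price-rounding route rather than weight rounding.
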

\begin{proof}
  Faliszewski et al.~\cite{fal-hem-hem:c:bribery} gave polynomial-time
  algorithms for both plurality-weighted-\$bribery and
  approval-weighted-\$bribery$'$ for the case when prices within those
  problems are encoded in unary. Running times of these algorithms are
  polynomial in the size of the problem \emph{and} the value of the
  largest price. We use this fact here to give an FPTAS for both
  problems. In both cases our results follow by an extension of the
  scaling argument used, e.g., in an FPTAS for Knapsack.

  Let $L$ be either plurality-weighted-\$bribery or
  approval-weighted-\$bribery$'$, let $I$ be an instance of $L$ and let
  $\varepsilon$ be a positive real number, $0 < \varepsilon < 1$.  By
  $T$ we mean the highest price occurring in $I$ and we assume that
  $T$ is at least $1$. Otherwise, any solution is optimal.  Also, we
  let $N$ be the total number of prices that occur in $I$. For
  plurality, $N$ is the number of voters and for approval, $N$ is the
  number of voters times the number of candidates. Thus, an upper
  bound on the cost of any bribery in $I$ is $TN$. We give an FPTAS
  for $L$.

  The idea of our algorithm is to scale down the prices so that they
  are polynomially bounded in $N$ and $\frac{1}{\varepsilon}$ and to run
  the polynomial-time algorithm of~\cite{fal-hem-hem:c:bribery} on
  thus modified instance. However, instance $I$ might include some
  voters with very high prices that are not needed in the optimal
  solution. By choosing a scaling factor appropriate for a high price
  range we might essentially lose all the information regarding the
  smaller prices, those that actually participate in the optimal
  solution. Thus, instead of performing one scaling, we perform
  polynomially many of them. We start with a fairly small scaling
  factor and keep on increasing it. With a small scaling factor some
  of the prices within $I$ will be too large, and so we assume that
  the voters with those prices cannot be bribed.

  Our algorithm executes $\lceil \log T \rceil$ iterations. In each
  iteration variable $t$ contains our current guess of an upper bound
  on the largest price used within the optimal solution.  We start
  with $t=1$ and we double it after every iteration.

  Given a value of $t$, an iteration is executed as follows. Set $K =
  \frac{t\varepsilon}{N}$ and construct an instance $I'$ that is
  identical to $I$, only that: (a) each price $p$ such that $p \leq
  t$, is replaced by $\lceil \frac{p}{K} \rceil$, and (b) any higher
  price is replaced by $\frac{1+2\varepsilon}{\varepsilon} N^2 + 1$.
  Note that due to this transformation $I'$ has all prices
  polynomially bounded in $\frac{1}{\varepsilon}$ and $N$. We find an
  optimal solution for $I'$ using the polynomial-time algorithm
  from~\cite{fal-hem-hem:c:bribery}. If the solution has cost
  $\frac{1+2\varepsilon}{\varepsilon}N^2 +1$ or higher then we discard
  it and otherwise, we store it for future use.

  After all the iterations are finished, we return a stored solution
  with the lowest cost. Clearly, we have at least one solution as
  in the last iteration $t \geq T$.

  We claim that this algorithm finds a solution with cost within
  $2\varepsilon$ of the optimal one. Let $O$ be an optimal solution
  and let $u$ be the highest price used within solution $O$. Let us
  consider an iteration of our algorithm with $t$ such that
  $\frac{t}{2} \leq u \leq t$. Let $S$ be our optimal solution to
  instance $I'$ in this iteration.  Since $I$ and $I'$ vary only in
  voter's prices, $S$ and $O$ are valid solutions for both $I$ and
  $I'$.  By $p(S)$ and $p(O)$ we mean the prices of briberies
  specified in $S$ and in $O$, respectively, expressed using prices from
  instance $I$.  By $p'(S)$ and $p'(O)$ we mean analogous values, but
  with respect to prices in $I'$. It holds that
  \[ p(O) \leq p(S) \leq Kp'(S) \leq Kp'(O). \] The first inequality
  holds because $O$ is an optimal solution to $I$ and the second one
  follows because instance $I'$ has prices rounded up.  The last
  inequality is due to the fact that $S$ is an optimal solution for
  $I'$. For any price $p$ in $I$, we have a corresponding price $p' =
  \lceil \frac{p}{K} \rceil$ in $I'$ and, due to rounding, it is easy
  to see that $p \leq Kp' \leq p + K$. Since any bribery involves
  paying at most $N$ prices we have that
  \[ Kp'(O) \leq p(O) + NK.\] Since $NK = \varepsilon t$, and in this
  iteration we have $\frac{t}{2} \leq u \leq t$, naturally we have
  that $\frac{t}{2} \leq u \leq p(O)$. Thus, we have
  \[ p(S) \leq Kp'(S) \leq p(O) + 2\varepsilon p(O) =
  (1+2\varepsilon)p(O). \] It remains to see that our algorithm does
  not discard solution $S$.

  We have that the highest price in $O$ is $u$, and so $p(O) \leq Nu
  \leq Nt$.  Via the above estimate we have that $Kp'(S) \leq p(O)
  (1+2\varepsilon) \leq (1+2\varepsilon)Nt$, and so $p'(S) \leq
  \frac{Nt}{K}(1+2\varepsilon) \leq \frac{1+2\varepsilon}{\varepsilon}
  N^2$. Thus, solution $S$ is stored within this iteration. Other
  iterations may only improve our solution and thus the final solution
  we output is within $2\varepsilon$ of the optimal one.\end{proof}

Unfortunately, it is unlikely that, in general, for polynomially
bounded values of $k$ and $b$, there is an FPTAS for
$(k,b)$-bribery. The reason for this is hidden in the proof of
Theorem~\ref{thm:npcom}: If there was an FPTAS for
$(1,1)$-weighted-bribery then, via using appropriately good
approximation, one could solve plurality-weighted-negative-bribery
exactly. Thus, we have the following corollary.

\begin{corollary}
  There is no FPTAS that solves $(k,b)$-weighted-bribery instances for
  any given $k,b$.
\end{corollary}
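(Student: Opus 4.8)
\quad The plan is to prove the contrapositive: if $(1,1)$-weighted-bribery had an FPTAS, then $\p = \np$. I would lean directly on the reduction already exhibited in the proof of Theorem~\ref{thm:npcom}. Recall that that reduction sends a plurality-weighted-negative-bribery instance $(E,p,B)$ to the $(1,1)$-weighted-bribery instance $I$ having the same election, the same budget $B$, all unit-bribery prices equal to $1$, and price $B+1$ for any unit bribery that moves a point onto $p$. The fact I would exploit (left implicit in that proof) is that $I$ has a \emph{gap} between its possible optimal values: since in a $(1,1)$-election each voter holds exactly one point, a bribery is simply a choice of which voters to redirect and to where, so (discarding pointless no-op moves) the cost of a legal bribery making $p$ a winner equals the number of voters moved onto non-$p$ candidates if no voter is moved onto $p$, and is at least $B+1$ otherwise. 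Consequently, if $(E,p,B)$ is a \emph{yes}-instance of plurality-weighted-negative-bribery then the optimal cost $O$ of $I$ is at most $B$, and if it is a \emph{no}-instance then $O \ge B+1$; no value of $O$ strictly between $B$ and $B+1$ is possible, and the two cases correspond exactly to the two answers of the negative-bribery instance.

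Next I would normalise $B \le n$, where $n$ is the number of voters (picking more than $n$ voters is pointless), and dispose of the trivial case $B = 0$ separately. Then I would run the hypothetical FPTAS $A$ on input $(I,\varepsilon)$ with $\varepsilon = \tfrac{1}{B+1}$; since $0 < \varepsilon < 1$ and $\tfrac{1}{\varepsilon} = B+1 \le n+1$ is polynomial in the size of $I$, the algorithm $A$ halts in polynomial time and returns a legal bribery of some cost $S$. If $O \le B$ then $S \le (1+\varepsilon)O \le \bigl(1 + \tfrac{1}{B+1}\bigr)B < B+1$, hence $S \le B$ because all costs are integral; if $O \ge B+1$ then $S \ge O \ge B+1 > B$. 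So the single test ``is $S \le B$?'' decides plurality-weighted-negative-bribery in polynomial time, contradicting its $\np$-completeness (established in~\cite{fal-hem-hem:c:bribery}) unless $\p = \np$.

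The $(1,1)$ case already yields the corollary, since it shows that even the simplest nontrivial choice of parameters admits no FPTAS; an entirely parallel argument applies wherever one has an analogous $\np$-hardness reduction carrying the same ``forbidden move $=$ budget $+1$'' gadget, e.g.\ for veto via $(m-1,1)$-weighted-bribery. I do not anticipate a real difficulty in carrying this out. The one place that needs genuine care is verifying the two facts that make the gap argument work: that $\tfrac{1}{\varepsilon}$ stays polynomially bounded (which is why one must first argue $B \le n$, and why the budget being unary or not is irrelevant here), and that the reduction of Theorem~\ref{thm:npcom} really produces the clean dichotomy $O \le B$ versus $O \ge B+1$ — in particular that every non-forbidden unit bribery costs exactly $1$, so that redirecting $j \le B$ voters costs exactly $j$.
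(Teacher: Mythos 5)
Your proposal is correct and follows exactly the route the paper sketches: the corollary in the paper is justified by noting that the Theorem~\ref{thm:npcom} reduction prices forbidden moves at $B+1$, so an FPTAS with sufficiently small $\varepsilon$ would exploit the resulting integrality gap ($O \le B$ versus $O \ge B+1$) to decide plurality-weighted-negative-bribery exactly, which is what you carry out in detail (with the sensible extra care about $B \le n$, integrality, and the $B=0$ edge case).
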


We point readers interested in the issue of approximation with respect
to control and bribery to the work of
Brelsford~\cite{bre:t:approximation}.

We conclude this section on a somewhat different note. Copeland rule,
which here we will call\footnote{We are following the naming scheme
  of~\cite{fal-hem-hem-rot:c:llull} here.} Copeland$^{0.5}$, is the
following: For each pair of candidates $c_i$ and $c_j$ we ask each
voter which one among the two he/she prefers. The one preferred by
majority receives one point. In case of a tie both candidates receive
half a point. In the end, the candidates with most points are winners.

Faliszewski et al.~\cite{fal-hem-hem-rot:c:llull}, among others,
studied similar elections where voters represent their votes via so
called irrational preference tables. An irrational preference table is
a symmetric function that given two candidates returns the one that
the voter prefers. Faliszewski et al.~\cite{fal-hem-hem-rot:c:llull}
defined a very natural bribery model for thus represented Copeland
elections, where flipping a value of each irrational preference table
entry comes at unit cost; they called this problem microbribery. We
note that microbribery in the case where flipping each entry of each
preference table may have a different price is a natural example of
what in this paper we call nonuniform bribery. Faliszewski et al. gave
polynomial-time algorithms for microbribery for Copeland$^0$ (a system
just like Copeland$^{0.5}$, but where no points are granted in case of
a tie in a head-to-head contest) and for Copeland$^1$ (defined,
analogously, as giving one point to each candidate in a tied
head-to-head contest).  Their algorithms naturally extend to a
situation where each preference-table-entry-flip comes with its own
price. Here we report that via fairly simple modifications their
algorithm can also be extended to handle Copeland$^{0.5}$.

\begin{theorem}
  There is a polynomial-time algorithm for microbribery in
  Copeland$^{0.5}$, even if each preference-table-entry-flip comes
  with its own, potentially distinct, price.
\end{theorem}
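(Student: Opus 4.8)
The plan is to adapt the min-cost-flow algorithm that Faliszewski et al.~\cite{fal-hem-hem-rot:c:llull} use for Copeland$^0$ and Copeland$^1$ microbribery, changing only the part that accounts for tied head-to-head contests. To avoid fractions I would work with doubled scores, so a pairwise win is worth $2$ points and a tie is worth $1$ point to each side; every Copeland$^{0.5}$ score then becomes an integer in $\{0,\dots,2(m-1)\}$. As preprocessing, for every unordered pair of candidates $\{c_i,c_j\}$ and each of the (at most) three possible outcomes of their head-to-head contest --- $c_i$ wins, tie, $c_j$ wins --- I would compute, by a simple greedy over the relevant column of the voters' preference tables, the minimum total price of entry-flips that realizes that outcome (for the current outcome this cost is $0$), together with the resulting point contribution to $c_i$ and to $c_j$. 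Since a flip in the $\{c_i,c_j\}$ column never affects any other contest, these computations are independent, and this is exactly where the (possibly distinct) prices enter the algorithm.

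Next I would guess $p$'s final doubled score $r\in\{0,\dots,2(m-1)\}$ --- polynomially many choices --- and, for each guess, build a flow network in the style of~\cite{fal-hem-hem-rot:c:llull}: the $m-1$ contests involving $p$ are wired so that the total number of points they deliver to $p$ is forced to be exactly $r$, the cost incurred being the precomputed cost of the chosen outcome, while the points such a contest delivers to the opponent $c_i$ flow into a ``tally'' node for $c_i$; each of the $\binom{m-1}{2}$ contests not involving $p$ is wired through a small fixed-size gadget that routes its two points to the tally nodes of $c_i$ and $c_j$ according to one of the three outcomes, again at the precomputed cost; and the tally node of every $c_i$ with $i\ge 2$ has total throughput capacity $r$, so a feasible flow exists only if every opponent can be held at score at most $r$. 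A minimum-cost feasible flow then yields a cheapest bribery making $p$ a winner with final score exactly $r$; I would return the cheapest such bribery over all $r$, and answer ``no'' if none costs at most the budget. Correctness and the polynomial running time then follow as in~\cite{fal-hem-hem-rot:c:llull}, using integrality of optimal flows to read off an actual set of entry-flips.

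The one place real care is needed --- and the only genuine difference from Copeland$^0$ and Copeland$^1$ --- is the gadget for a non-$p$ contest. In those systems a head-to-head contest has, for scoring purposes, only two relevant outcomes, so it contributes a single indivisible point and its gadget is trivial; in Copeland$^{0.5}$ the half-point tie makes a contest contribute two points that may be split $2$--$0$, $1$--$1$, or $0$--$2$, and the three precomputed costs need not be a convex function of the split, so one cannot simply hang two parallel unit-capacity edges off a contest node and let the min-cost flow choose. The modification must build the gadget so that the ``tie'' outcome is charged its true cost and cannot be simulated more cheaply by a partial move toward a full reversal, exploiting that a tie is never more expensive than a full reversal and that (when a tie is possible at all) any reversal of the contest passes through a tie. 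Verifying that such a constant-size gadget charges each of the three outcomes exactly its precomputed cost with no spurious cheaper routing --- equivalently, that the non-convexity introduced by ties can be absorbed into the network --- is the crux of the argument; I expect this to be the main obstacle, while everything else is bookkeeping that parallels the existing algorithm.
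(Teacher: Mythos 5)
Your overall route is exactly the one the paper indicates: the paper omits the proof entirely, saying only that it again uses the flow-network technique and that the key difference from the Copeland$^0$/Copeland$^1$ algorithms of~\cite{fal-hem-hem-rot:c:llull} is a gadget that lets a point be ``split'' to model a tie. Your doubled-score formulation (win $=2$, tie $=1$--$1$), the per-pair precomputation of the three outcome costs, the guess of $p$'s final score $r$, and the capacity-$r$ tally nodes are all faithful to that plan, and your diagnosis of why the adaptation is not automatic is correct and worth making explicit: the three outcome costs of a contest need not be convex in the split (e.g., with margin $4$ and all flips of unit price, the tie costs $2$ but a full reversal costs $3$, so two parallel unit edges of costs $2$ and $1$ would let the flow buy the tie for $1$), so one cannot simply hang parallel edges off a contest node.

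The gap is that you stop at precisely this point. The entire content of the theorem, beyond what is already in~\cite{fal-hem-hem-rot:c:llull}, is the construction of a constant-size gadget (or an equivalent network transformation) that charges each of the three outcomes exactly its precomputed cost and admits no cheaper spurious routing, together with the verification that integral min-cost flows in the assembled network correspond to legal sets of entry-flips. You name this as ``the crux'' and state that you ``expect'' it to be the main obstacle, but you neither exhibit the gadget nor argue that the non-convexity can always be absorbed (your observations that a tie is never dearer than a reversal and that any reversal passes through a tie are true, but by your own example they do not by themselves yield a two-edge gadget). As a standalone proof the proposal is therefore incomplete at the one step that matters; to close it you would need an explicit gadget --- for instance one that forces the two units of a contest to commit jointly to one of the three outcomes rather than pricing them edge-by-edge --- and a correctness argument that minimum cost plus integrality recovers a genuine bribery.
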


We omit the proof, which we expect to include in the full version
of~\cite{fal-hem-hem-rot:c:llull}. However, we mention that the proof
again uses the flow-network technique we used before and that the key
difference as compared to~\cite{fal-hem-hem-rot:c:llull} is that we
include a gadget that allows us to ``split'' a point that travels in
the network in case we need to model a tie between two candidates.

We mention that it does not seem to be easy to generalize our
algorithm to tie-handling values other than $0$, $\frac{1}{2}$, and
$1$. The cases of $0$ and $1$ are special because we can manage to
pump appropriate number of points through our flow network, and the
case of $\frac{1}{2}$ is easy to handle as then both candidates in a
tie are handled symmetrically in the network. We suspect that
microbribery for other tie-handling values might be $\np$-complete.

\section{Discussion and Open Problems}

In this paper we have introduced and studied the concept of nonuniform
bribery, where briber's payment to each particular voter depends on
the nature of the bribery performed. We have argued that this
generalization of the problem is interesting as it both models the
reality more precisely (voters may not be willing to modify their
votes in certain ways, or may require higher payments for such
changes), and it can be useful for coalition-formation tasks and to
model briber's policies. We have shown efficient algorithms for
nonuniform bribery for the case of plurality, veto, variants of
approval, variants of utility-based voting, and Copeland. However, our
nonuniform bribery problem is so expressive that as soon as we
consider weighted voters we can easily find special cases that are
$\np$-complete.

We have also shown that some of the known $\np$-complete bribery
problems in fact have fully polynomial-time approximation schemes.

We conclude this paper with several open questions. In particular we
are interested in the complexity of unweighted manipulation of Borda
count (an election system somewhat similar to our utility based
voting, but where the amounts of points that the voters has to assign
to candidates are set very rigidly). To the best of our knowledge,
unweighted coalitional manipulation has not been studied for Borda.
Similarly, we are interested in the complexity of unweighted bribery
for Borda. We also mention that the complexity of microbribery for
Copeland and tie-handling values other than $0$, $\frac{1}{2}$ and $1$
is unknown and poses an interesting challenge.

\section*{Acknowledgements}

I would like to thank Edith Hemaspaandra, Lane Hemaspaandra, J\"org
Rothe and Henning Schnoor for very interesting and useful
discussions.

\bibliography{grypiotr2006}
\small
\bibliographystyle{alpha}
\end{document}